\documentclass[12pt]{nj_style}

\usepackage{amsmath}
\usepackage{amssymb}
\usepackage{amsthm}
\usepackage{enumerate}
\usepackage[font=small,labelfont=bf,margin=8pt]{caption}
\usepackage[usenames,dvipsnames]{color}
\usepackage{graphicx}
\usepackage[colorlinks=true]{hyperref}
 
\newtheorem{thm}{Theorem} 



\begin{document}

\begin{frontmatter}


\title{The Complexity of the Puzzles of \emph{Final Fantasy XIII-2}}
\author{Nathaniel Johnston}
\ead{njohns01@uoguelph.ca}
\ead[http://www.njohnston.ca]{www.njohnston.ca}
\address{Department of Mathematics and Statistics, University of Guelph, Guelph, Ontario N1G~2W1, Canada}

\begin{abstract}
	We analyze the computational complexity of solving the three ``temporal rift'' puzzles in the recent popular video game \emph{Final Fantasy XIII-2}. We show that the Tile Trial puzzle is NP-hard and we provide an efficient algorithm for solving the Crystal Bonds puzzle. We also show that slight generalizations of the Crystal Bonds and Hands of Time puzzles are NP-hard.
\end{abstract}

\begin{keyword}
Hamiltonian paths \sep video games \sep puzzles \sep grid graphs \sep computational complexity


\end{keyword}

\end{frontmatter}


\section{Introduction}\label{sec:intro}

The computational complexity of video games is a topic that has earned a significant amount of attention in recent years. Many video games have been proved to be NP-hard, including \emph{Clickomania} \cite{BDD+02}, \emph{Commander Keen} \cite{For10}, \emph{Flood-It!} \cite{ACJ+10}, \emph{Lemmings} \cite{Cor04,For10,Vig12}, \emph{Minesweeper} \cite{Kay00}, \emph{Pac-Man} \cite{Vig12}, \emph{Sokoban} \cite{FG96}, and \emph{Tetris} \cite{DHL03} (see also the survey article \cite{KPS08}).

Most video games that have been studied from a computational complexity point of view are classics: games that were originally developed while video gaming was in its infancy. In the present paper, we instead study the modern video game \emph{Final Fantasy XIII-2}, which was released for the Playstation 3 and XBox 360 video game consoles in early 2012.\footnote{It was originally released in Japan slightly earlier, on December 15, 2011.} In this game, the player is frequently tasked with solving one of three types of ``temporal rift'' puzzles in order to advance. We investigate and derive a hardness result for all three of these puzzles.

The first puzzle, called \emph{Tile Trial}, tasks the player with collecting crystals that are placed on square tiles, while being careful not to step on any tile more than once. We show that the Tile Trial puzzle is NP-hard in Section~\ref{sec:tile_trial}. The second puzzle, called \emph{Crystal Bonds}, is investigated in Section~\ref{sec:crystal_bonds}. This puzzle presents several crystals laid out on a grid, and asks the player to form bonds between some specified pairs of crystals before time runs out. We show that an optimal solution to the Crystal Bonds puzzle can be found in polynomial time, but a natural generalization of the puzzle is NP-hard. Finally, in Section~\ref{sec:hands_of_time} we consider the third puzzle, called \emph{Hands of Time}, in which the player tries to navigate a clock face subject to certain movement restrictions. We show that solving partially-completed Hands of Time puzzles is also an NP-hard problem.

Note that the reader is assumed to be familiar with basic concepts in graph theory. All of our reductions are from various forms of the Hamiltonian path problem -- that is, the problem of finding a path on a graph that visits every vertex exactly once. We use the term \emph{grid graph} to refer to an undirected graph in which every vertex is located at a point on the $2$-dimensional integer lattice $\mathbb{Z}^2$ and there is an edge between two vertices if and only if the (Euclidean) distance between them is $1$.

\section{Puzzle \#1: Tile Trial}\label{sec:tile_trial}

The first puzzle introduced in the game is called \emph{Tile Trial}, which presents the player with a grid of tiles, some of which have crystals on them. The goal is to retrieve all of the crystals and reach the exit, while stepping on each tile no more than once -- see Figure~\ref{fig:tile_trial}. The player's starting tile is always on one far end of the grid, while the finishing tile is on the opposite end of the grid.

\begin{figure}[htb]
	\centering
	\includegraphics[scale=0.7]{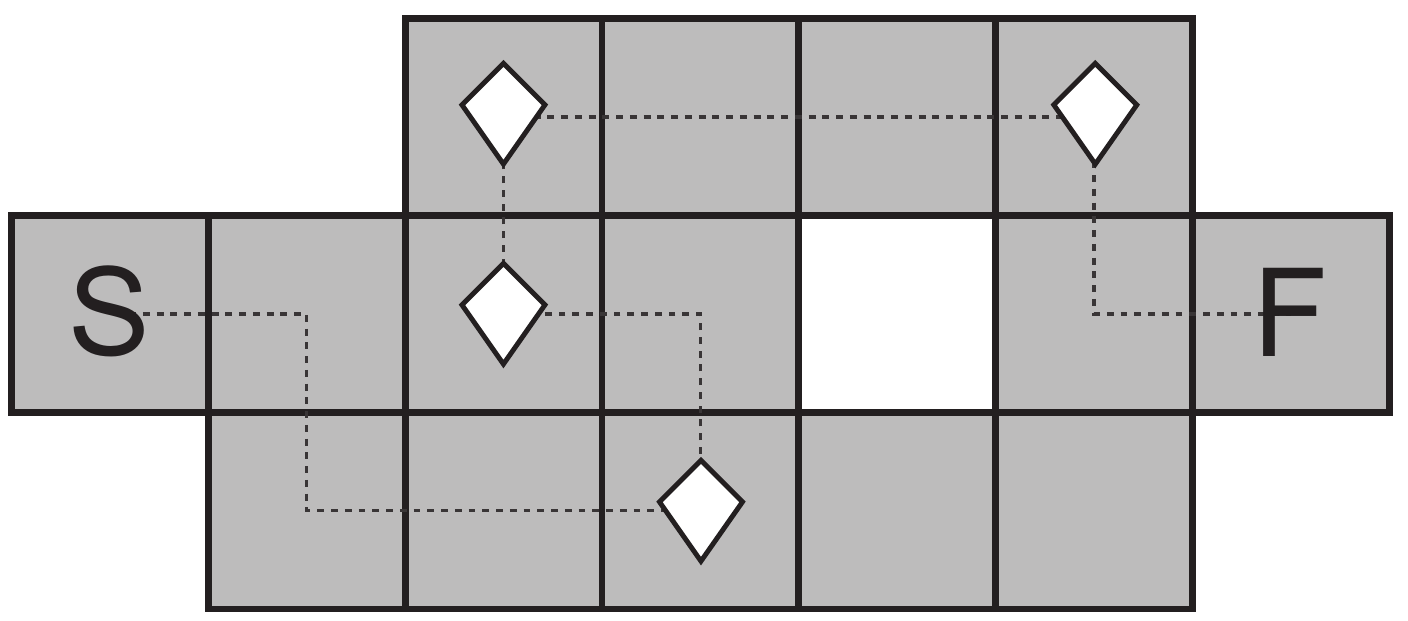}
	\caption{An early-game Tile Trial puzzle. The goal is to move from tile to tile, starting at the tile marked ``S'' and finishing at the tile marked ``F'', in such a way that each tile containing a crystal ($\lozenge$) is touched and no tile is touched more than once. The path traced out by the grey dotted line is a solution to the given puzzle.}\label{fig:tile_trial}
\end{figure}

Some of the more difficult Tile Trial puzzles add in two additional challenges: some tiles may be stepped on twice, and some crystals periodically move between two different tiles. We now show that Tile Trial puzzles are NP-hard even if the crystals never move.

\begin{thm}
	The Tile Trial puzzle is NP-hard (and the associated decision problem is NP-complete).
\end{thm}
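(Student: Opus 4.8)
The plan is to reduce from the Hamiltonian path problem on grid graphs with two specified endpoints, which (as noted in the introduction) is a known NP-complete problem. Membership in NP is immediate: a candidate solution is just the sequence of tiles forming the player's route, which has length at most the number of tiles; one checks in polynomial time that consecutive tiles are grid-adjacent, that no tile is repeated, that the route starts at the tile $S$ and ends at the tile $F$, and that every crystal-bearing tile occurs somewhere in the sequence. So the decision problem lies in NP, and it suffices to establish NP-hardness.

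For the hardness reduction, I would begin with a grid graph $G$ together with two of its vertices $s$ and $t$, the question being whether $G$ has a Hamiltonian path from $s$ to $t$. From this I build a Tile Trial puzzle whose tiles are exactly the vertices of $G$, inheriting the unit-distance adjacencies of the integer lattice, and I place a crystal on \emph{every} one of these tiles. Since any legal route steps on each tile at most once and is required to collect every crystal, the portion of the route lying inside $G$ is forced to be a simple path that visits all of $G$'s vertices, i.e.\ a Hamiltonian path. The only puzzle convention not yet respected is that the start tile $S$ and finish tile $F$ must sit on opposite far ends of the grid. To arrange this I would enclose $G$ in a large rectangular region and attach two width-one, crystal-free corridors: a run of tiles joining a new tile $S$ on the far-left boundary to $s$, and a second run joining $t$ to a new tile $F$ on the far-right boundary, routed so that neither corridor meets $G$ or the other corridor except at its designated endpoints. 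Because $S$ has no neighbour other than the first corridor tile and $F$ has no neighbour other than the last, and because every internal corridor tile has degree two, any solution must begin by threading the entire left corridor into $s$ and end by threading the entire right corridor out of $t$. Hence the puzzle is solvable if and only if $G$ has a Hamiltonian $s$--$t$ path, and the whole construction is clearly polynomial in the size of $G$.

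I expect the main obstacle to be the layout bookkeeping rather than any conceptual point: one must confirm that the two corridors can always be embedded in the plane so that they introduce no \emph{spurious} adjacencies (an extra edge from a corridor tile to a vertex of $G$ would let the player bypass part of $G$ or skip a corridor), and that $S$ and $F$ really land in opposite extreme columns so the construction conforms to the game's start/finish convention. I would handle this with a routine planar-routing argument, using the freedom to translate $G$ and to pad it with empty lattice space — for instance by first rescaling the embedding of $G$ so that every vertex has vacant orthogonal neighbours — so that $s$ and $t$ can each be reached from the appropriate boundary by a corridor whose only contact with $G$ is the single intended edge at its endpoint.

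Finally, I should remark that this argument uses only the most basic features of the puzzle, with stationary crystals and each tile steppable at most once; the harder in-game variants (tiles that may be stepped on twice, crystals that periodically move) strictly generalize this setting, so NP-hardness of the general puzzle follows a fortiori.
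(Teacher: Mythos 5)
Your overall architecture (reduce from a Hamiltonian path problem on grid graphs, put a crystal on every vertex-tile, attach crystal-free corridors to move the start and finish to the boundary) is reasonable, and your NP-membership argument is fine. But the step you dismiss as ``layout bookkeeping'' is a genuine gap, and the fix you sketch for it does not work. In a Tile Trial puzzle, adjacency is determined entirely by tile positions: two tiles are connected exactly when they are at distance $1$ in the lattice. So you cannot ``rescale the embedding of $G$ so that every vertex has vacant orthogonal neighbours'' --- scaling the coordinates by any factor greater than $1$ separates formerly adjacent vertices to distance at least $2$ and thereby deletes \emph{every} edge of $G$; re-inserting tiles to restore connectivity subdivides the edges, which changes the Hamiltonicity question entirely (a Hamiltonian path in the subdivided graph must cover every subdivision tile, i.e.\ every edge of $G$). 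Without rescaling, the corridor attachment can simply be impossible: if the prescribed endpoint $s$ is an interior vertex of $G$ with all four lattice neighbours occupied, there is no vacant position adjacent to $s$ at which a corridor can terminate; and even when a vacant neighbour $p$ exists, the corridor tile placed at $p$ may be at distance $1$ from other vertices of $G$, creating exactly the spurious shortcut edges you warn about. You would need either to prove that the hard instances of grid-graph Hamiltonian path from \cite{IPS82} can be taken with $s$ and $t$ in ``safe'' boundary positions, or to redesign the gadgetry; as written, the reduction does not go through for an arbitrary instance of the source problem.

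The paper sidesteps this entirely by reducing from the Hamiltonian \emph{cycle} problem on grid graphs and exploiting the game's tiles that may be stepped on twice: a single bottommost vertex of $G$ is chosen as the unique entry/exit point, a short vertical stem of twice-steppable tiles is hung directly below it (a position that is automatically vacant and adjacent to no other vertex of $G$), and the start and finish paths both feed into the bottom of that stem. The player walks in, traverses a Hamiltonian cycle, returns through the same vertex, and walks out. This removes the need for two independent, spur-free corridor attachments at prescribed vertices, which is precisely where your construction breaks down. If you want a proof that avoids the twice-steppable tiles (which would be a genuinely stronger statement, since it would establish hardness of the basic puzzle variant), you must supply a real argument for the existence and routability of the two attachment points; the current appeal to a ``routine planar-routing argument'' does not provide one.
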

\begin{proof}
	Consider an arbitrary grid graph. Construct a Tile Trial puzzle by first placing a tile and a crystal at each of the vertices of the graph. Then choose one of the bottommost tiles and let this tile be stepped on twice. Place two tiles, each of which may be stepped on twice, in a vertical line below this bottommost tile, as in Figure~\ref{fig:tile_trial2}. Finally, create a path from the new bottommost tile that leads left to the player's starting tile, and another path that leads right to the finishing tile.
	
\begin{figure}[htb]
	\centering
	\includegraphics[width=0.95\textwidth]{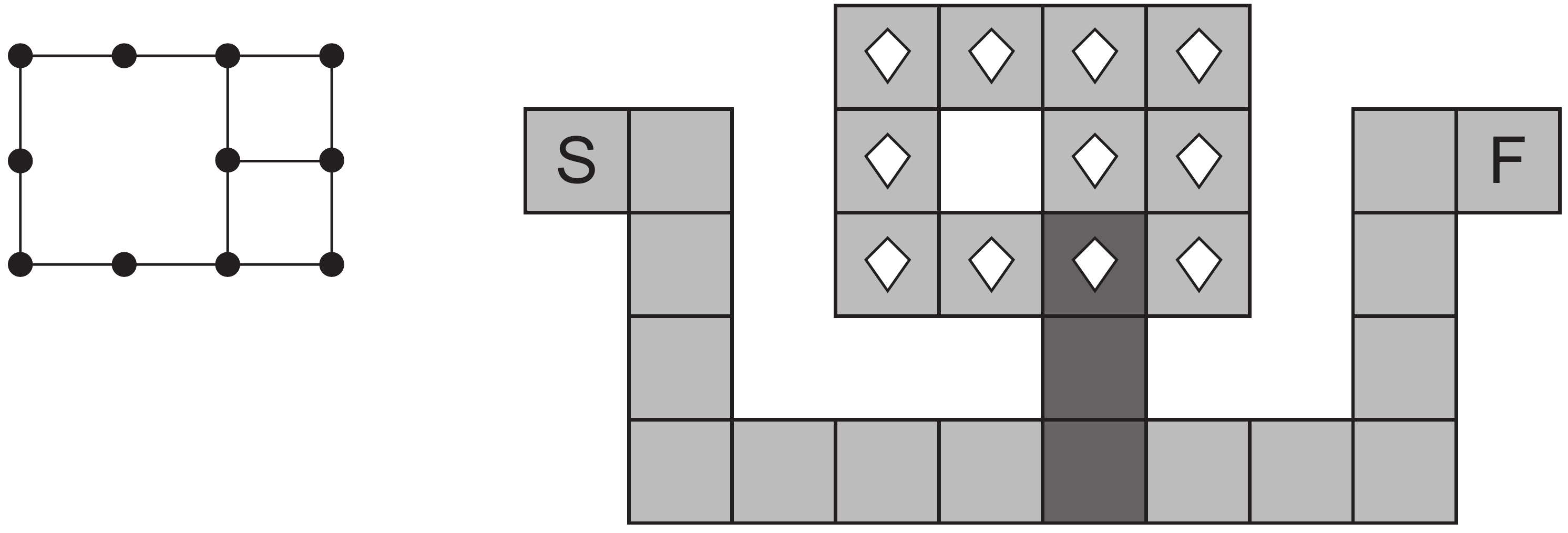}
	\caption{(left) A grid graph. (right) A corresponding Tile Trial puzzle. Dark grey tiles can be stepped on twice. The Tile Trial puzzle is unsolvable, which indicates that the grid graph does not have a Hamiltonian cycle.}\label{fig:tile_trial2}
\end{figure}
	
	It is clear that this Tile Trial puzzle has a solution if and only if the original directed graph has a Hamiltonian cycle. Because the Hamiltonian cycle problem on grid graphs is NP-hard \cite{IPS82}, NP-hardness of Tile Trial puzzles follows.
	
	To see that the problem of deciding whether or not a given Tile Trial puzzle is solvable is NP-complete, note that it is trivial to verify whether or not a given path touches every tile at most once (or twice for certain tiles) and touches every crystal, so this problem is in NP.
\end{proof}

\section{Puzzle \#2: Crystal Bonds}\label{sec:crystal_bonds}

The second puzzle introduced to the player is called \emph{Crystal Bonds}, which presents the player with an orthogonal grid of tiles, some of which contain crystals, and a designated starting tile on a far side of the grid (much like the Tile Trial puzzle). In this puzzle, however, there are straight lines drawn between some crystals, and the goal is to bond all of the indicated crystals by walking between them before time runs out (see Figure~\ref{fig:crystal_bonds}). Because the player walks at a constant speed, we consider the problem of minimizing the distance that needs to be traveled (rather than time) in order to solve the puzzle. Some points of clarification about the rules of this puzzle include:

\begin{figure}[htb]
	\centering
	\includegraphics[scale=0.7]{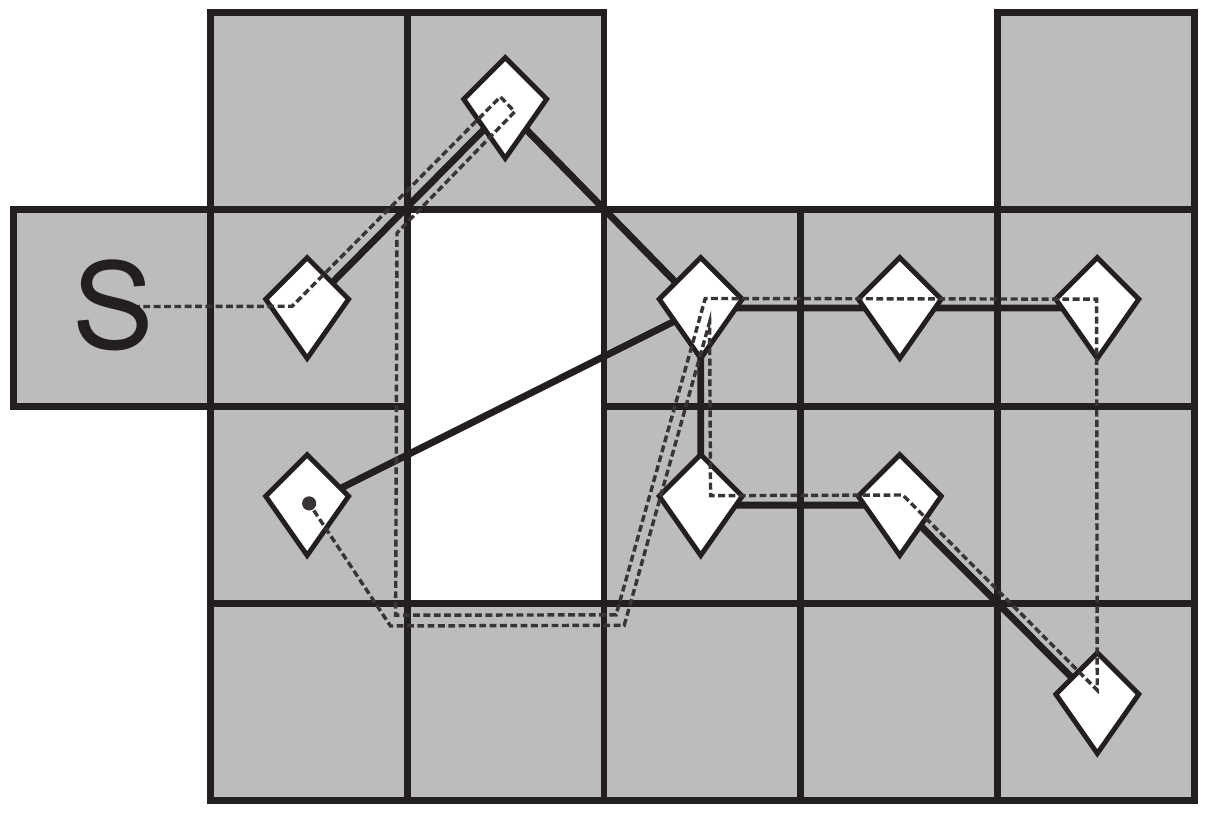}
	\caption{A Crystal Bonds puzzle on a $4 \times 6$ grid. The goal is to bond each pair of crystals ($\lozenge$) that are connected by a black line before time runs out. A solution of minimum distance is overlaid as a grey dotted line. The player starts at the tile marked ``S'', but may end at any crystal of their choosing (the bottom-left crystal in the given solution).}\label{fig:crystal_bonds}
\end{figure}

\begin{itemize}
	\item The player may move in any direction -- they are not restricted to movement orthogonally along the tiles.
	
	\item The player may only move between tiles that are connected orthogonally (i.e., along their edges) -- movement between tiles that are connected only at a corner is not allowed unless there is another tile that connects to both of them orthogonally.
	
	\item Lines between crystals may go over empty space on which the player can not walk. The player's path between those crystals does not need to be a straight line.

	\item The graph of crystal bonds (where crystals are the vertices and the bonds that must be formed are the edges) is always a tree.
	
	\item The player may only create one bond between pairs of crystals at a time. For example, in Figure~\ref{fig:crystal_bonds_link} the player can not create both bonds by walking along the path $A \rightarrow B \rightarrow C$ or $C \rightarrow B \rightarrow A$. However, they can create both bonds by following the longer path $B \rightarrow A \rightarrow C$.
\end{itemize}

\begin{figure}[htb]
	\centering
	\includegraphics[scale=0.7]{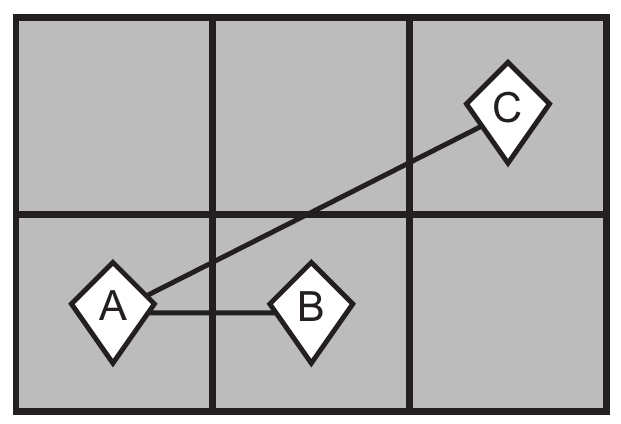}
	\caption{A Crystal Bonds puzzle with $3$ crystals.}\label{fig:crystal_bonds_link}
\end{figure}

Note that in the game, crystals periodically change color, and crystals can only be bonded when they are both the same color. We ignore this additional gameplay element, however, as it does not affect our results. We now show that the optimal solution to the Crystal Bonds puzzle (i.e., the path of minimum length that bonds all of the specified pairs of crystals) can be found efficiently.

%
%

\begin{thm}\label{thm:crystal_bonds}
	The optimal solution to a Crystal Bonds puzzle on an $m \times n$ grid with $r$ crystals can be found in $O(r\max\{mn\log(mn),r^2\})$ time.
\end{thm}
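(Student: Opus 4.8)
The plan is to reformulate the puzzle as a shortest covering-walk problem on the bond tree and then solve it with a Chinese-Postman-style matching argument. First I would set up the reduction. Label the crystals $c_1,\dots,c_r$ and let $T$ be the bond tree on these vertices. Since a single leg of the player's route can create at most one bond, and a leg creates a bond only when it runs directly between two crystals joined in $T$, forming all of the required bonds is exactly the task of traversing every edge of $T$ at least once; between consecutive bond-forming legs the player merely repositions, and an optimal reposition between two crystals follows a shortest path in the grid graph. Writing $d_G(u,v)$ for the grid-graph distance between two tiles, the cost of forming bond $(c_i,c_j)$ is $d_G(c_i,c_j)$, and the cost of any reposition between $c_i$ and $c_j$ is likewise $d_G(c_i,c_j)$. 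Thus the puzzle is equivalent to finding a minimum-length walk that starts at the tile $S$, traverses every edge of $T$, and may end at any crystal.

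The first computational step turns the grid into a weighted complete graph on the crystals. I would run Dijkstra's algorithm from each crystal (and from $S$) on the grid graph, which has $O(mn)$ vertices and $O(mn)$ edges; each run costs $O(mn\log(mn))$, so all pairwise distances $d_G(c_i,c_j)$ together with the distances $d_G(S,c_i)$ are obtained in $O\big(r\,mn\log(mn)\big)$ time. Because each $d_G$ is a shortest-path length, these distances satisfy the triangle inequality, so repositioning never profits from routing through an intermediate crystal and the metric completion loses nothing.

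With this metric in hand, the covering-walk problem becomes a Chinese/Rural Postman problem whose set of required edges is the spanning tree $T$. The crucial simplification is that $T$ is connected and spans every crystal, so the usual obstruction to the Rural Postman Problem disappears: any choice of extra (``deadheading'') edges automatically yields a connected multigraph, and an Eulerian walk exists as soon as the parity condition is met. Hence the optimal route has length $W + M$, where $W=\sum_{(c_i,c_j)\in T} d_G(c_i,c_j)$ is the total bond length and $M$ is the minimum weight of a set of extra edges that leaves exactly two vertices of odd degree. I would compute $M$ as a minimum-weight perfect matching (a $T$-join) on the odd-degree vertices of $T$, using the distances $d_G$ as weights, which the blossom algorithm solves in $O(r^3)$ time. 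To encode the conditions that the walk must begin at $S$ and may finish at an arbitrary crystal, I would adjoin $S$ together with one auxiliary vertex joined by zero-weight edges to every crystal, so that the forced start and the free endpoint are selected by the matching itself and a single $O(r^3)$ computation suffices.

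Combining the two phases gives total running time $O\big(r\,mn\log(mn) + r^3\big) = O\big(r\max\{mn\log(mn),\,r^2\}\big)$, as claimed. The step I expect to be the main obstacle is the reduction itself rather than the matching: one must argue carefully that the idiosyncratic rules --- in particular the ``one bond at a time'' restriction and the possibility of a repositioning leg gliding over other crystals or over empty tiles --- neither impose hidden constraints nor create shortcuts beyond those captured by $d_G$, so that the equivalence with the tree-covering walk is exact. Verifying this, together with the parity bookkeeping in the auxiliary matching instance that models the fixed start and free finish, is where the real care is needed.
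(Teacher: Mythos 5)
Your second phase is essentially the paper's: the paper invokes the Rural Postman algorithm of Frederickson and Lenstra--Rinnooy Kan for the case of a connected set of required edges, which is precisely the $T$-join / odd-degree matching computation you describe, and both run in $O(r^3)$. The genuine gap is in your first phase, where you compute the wrong metric. The puzzle's rules state that the player may move in \emph{any} direction rather than orthogonally from tile to tile, so the quantity being minimized is the Euclidean length of a path through the walkable region, and the correct pairwise distance between crystals is the Euclidean geodesic distance amid the polygonal obstacles formed by the missing tiles. Dijkstra on the grid graph computes a different (and generally larger) metric: two diagonally adjacent crystals inside a fully walkable $2\times 2$ block are at true distance $\sqrt{2}$ but at grid-graph distance $2$, and the two metrics are not related by a uniform scaling, so a route optimal for the grid-graph distances need not be optimal for the actual walking distance. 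The paper instead computes a shortest path map from each crystal via the Hershberger--Suri algorithm for Euclidean shortest paths among polygonal obstacles, at cost $O(mn\log(mn))$ per crystal, which gives the same $O(rmn\log(mn))$ bound for this phase; you need this (or an equivalent continuous shortest-path computation), not a graph search on tile centers.

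Two smaller remarks. First, your concern that repositioning legs gliding over other crystals might create or forbid bonds is the right thing to worry about, and is resolved exactly as you suggest by the ``one bond per leg'' rule. Second, your auxiliary-vertex gadget modelling the fixed start $S$ and the free finishing crystal is more explicit than anything in the paper, which quietly solves the Rural Postman instance on $K_r$ without incorporating $S$ at all; that portion of your argument is a refinement of the published proof rather than a defect, though the parity bookkeeping there still needs to be checked carefully.
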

\begin{proof}
	We present an algorithm that solves Crystal Bonds puzzles in two steps. The first step can be completed in $O(rmn\log(mn))$ time, while the second step can be completed in $O(r^3)$ time. Taking the longest running time of these two steps gives the result.
	
	Our first step is to compute the minimum walking distance between any two crystals (even between pairs of crystals that we don't need to bond). Given a fixed point in the plane and polygonal obstacles containing a total of $b$ vertices, a \emph{shortest path map} can be computed in $O(b\log(b))$ time via the algorithm of \cite{HS99}. From this shortest path map, the minimum distance between the fixed point and any other point can be computed in $O(\log(b))$ time. The obstacles in this case are the missing tiles, so there are certainly no more than $(m+1)(n+1)$ (i.e., $O(mn)$) vertices of obstacles. Thus we can compute the minimum walking distance between a single crystal and all other crystals in $O(mn\log(mn))$ time, and so we can find the minimum walking distance between all pairs of crystals in $O(rmn\log(mn))$ time.
	
	Our next step is to consider the complete graph $K_r$ whose vertices are the $r$ crystals, and weigh each edge of $K_r$ to be the minimum walking distance between the two crystals at its endpoints. The problem of bonding the specified crystals is now a special case of the \emph{Rural Postman Problem} \cite{Orl74}, which asks for the shortest path on an undirected graph $G$ that uses each edge in some specified subset $E$ of the edges of $G$ at least once. For us, $G = K_r$ and $E$ is the set of edges corresponding to the required bonds between crystals. The Rural Postman Problem can be solved in $O(v^{2c+1}/c!)$ time \cite{Fre79,LR81}, where $v$ is the number of vertices of $G$ and $c$ is the number of connected components of the subgraph of $G$ whose edges are the elements of $E$. In our situation, we have $c = 1$ because the set $E$ forms a tree and is thus connected, and $v = r$. It follows that this step can be completed in $O(r^3)$ time.
\end{proof}

Even though Crystal Bonds puzzles can be solved in polynomial time, there is a natural generalization that is $NP$-hard. We call a puzzle that is the same as a Crystal bonds puzzle, except which allows the graph of crystal bonds to be disconnected (rather than forcing it to be a tree), a Disconnected Crystal Bonds puzzle. Recall that the algorithm of Theorem~\ref{thm:crystal_bonds} relied on finding a solution to the Rural Postman Problem. Even though the Rural Postman Problem can be solved in polynomial time as long as the set of required edges is connected (or even if it contains a constant number of connected components \cite{Fre79,Orl76}), it becomes NP-hard if the number of connected components is allowed to grow \cite{LR76}.

The Crystal Bonds puzzle behaves similarly. We saw that it is solvable in polynomial time when the graph of crystal bonds is connected, and a similar argument works to show that it is solvable in polynomial time if we let the graph of crystal bonds have a constant number of connected components. However, we now show that if we place no restrictions on the crystal bonds, then the puzzle becomes NP-hard.

\begin{thm}
	The Disconnected Crystal Bonds puzzle is NP-hard (and the associated decision problem is NP-complete).
\end{thm}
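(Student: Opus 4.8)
The plan is to reduce from the Hamiltonian path (or cycle) problem on grid graphs, mirroring the structure of the Tile Trial proof but exploiting the freedom that disconnected crystal bonds give us. The key insight is that when the bond graph is disconnected, each connected component forces the player to visit a certain region, but the player is free to travel between components in any order and along any route. I would encode a Hamiltonian path instance so that bonding all the required pairs forces the player to traverse a path that visits every vertex of the grid graph exactly once, making the total walking distance small precisely when such a Hamiltonian path exists. Concretely, given an arbitrary grid graph $G$, I would place a crystal at each vertex of $G$, and then attach to each crystal a short ``pendant'' bond (a dedicated pair of crystals, or a crystal paired with a nearby partner) whose only cheap completion requires the player to actually stand on that vertex. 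Because each bond is its own component, there is no tree constraint tying the visits together.

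The heart of the reduction is the distance threshold. First I would argue that if $G$ has a Hamiltonian path from the start vertex, then the player can walk along that path, completing each pendant bond as they arrive at the corresponding vertex, for a total distance equal to the length of the path (roughly the number of vertices). Conversely, I would show that any solution completing all bonds must visit every crystal's vertex, and that a solution staying within the threshold cannot afford to revisit any tile --- forcing the walk to be a simple path through all vertices, i.e.\ a Hamiltonian path in $G$. The gadget must be designed so that ``completing a bond cheaply'' is equivalent to ``visiting the vertex once,'' while any detour or revisit incurs enough extra distance to push the player over the threshold. The one-bond-at-a-time rule illustrated in Figure~\ref{fig:crystal_bonds_link} should be respected in the gadget so no pair of bonds can be satisfied by a single pass that cheats the intended counting.

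The main obstacle I anticipate is controlling the geometry precisely enough that the distance threshold is clean. Since the player may move in \emph{any} direction (not just orthogonally), diagonal shortcuts and the Euclidean shortest-path metric complicate the bookkeeping: I must ensure that the cheapest way to realize all bonds genuinely corresponds to a simple orthogonal-style traversal of $G$, with no geometric shortcut undercutting the intended cost. I would address this by spacing the crystals and their pendant partners so that the straight-line (shortest-path-map) distances between nonadjacent vertices strictly exceed the cost of moving along grid edges, thereby pinning the optimal route to the grid structure. Once the threshold and gadget are calibrated, membership in NP is immediate --- a claimed walk can be checked in polynomial time by verifying it completes every required bond and measuring its length --- so the bulk of the work, and the place where care is needed, is establishing the equivalence between a below-threshold solution and a Hamiltonian path.
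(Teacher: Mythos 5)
Your proposal matches the paper's proof in all essentials: the paper places a pendant crystal adjacent to each vertex-crystal (each pendant pair forming its own bond component, so the player must visit every vertex), subdivides each grid-graph edge with $2v$ extra tiles so that a single edge traversal costs $2v+1$ while all pendant detours together cost at most $2v$, and uses the threshold $(v-1)(2v+1)+2v$ to separate Hamiltonian from non-Hamiltonian instances --- exactly the scaling-plus-pendant calibration you describe. The one detail you gloss over, the rule that the starting tile must lie on a far side of the grid, is handled in the paper by a small local modification near a leftmost vertex of degree $1$ or $2$, which converts the question into one about Hamiltonian cycles without affecting the reduction.
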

\begin{proof}
	A proposed solution of the Disconnected Crystal Bonds puzzle can have its length verified in polynomial time, so the decision problem version of the puzzle (i.e., the problem that asks whether there exists a solution with walking distance $\leq k$) is in NP. We now demonstrate NP-hardness.
	
	We prove the result via a reduction from the NP-hard problem of finding Hamiltonian paths on grid graphs \cite{IPS82}. Given a grid graph with $v$ vertices, we construct a Disconnected Crystal Bonds puzzle by placing a tile with a crystal on it at each vertex of the grid graph and $2v$ tiles along each edge of the grid graph, as in Figure~\ref{fig:disconnected_crystal_bonds}. We also place one crystal in a single (arbitrarily-chosen) tile adjacent to each of the tiles already containing a crystal, and we draw lines connecting these adjacent pairs of crystals.

\begin{figure}[htb]
	\centering
	\includegraphics[width=0.95\textwidth]{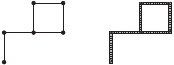}
	\caption{(left) A grid graph with $v = 6$ vertices. (right) A corresponding Disconnected Crystal Bonds puzzle with minimal solution length $65 \leq (v-1)(2v+1) + 2v = 77$, indicating that the graph on the left has a Hamiltonian path.}\label{fig:disconnected_crystal_bonds}
\end{figure}

Walking along an edge of the grid graph corresponds to the player walking a distance of $2v+1$ tiles in a straight line in the Disconnected Crystal Bonds puzzle. Once the player is at a tile corresponding to a vertex, they can connect the crystal there to the adjacent crystal by walking a distance of no more than $2$: one tile to the adjacent crystal, and one tile back. Thus, if the original grid graph has a Hamiltonian path, then the Disconnected Crystal Bonds puzzle has a solution of length no longer than $(v-1)(2v+1) + 2v$. On the other hand, if the original grid graph does not have a Hamiltonian path, then the optimal solution to the Disconnected Crystal Bonds puzzle has length at least $v(2v+1) > (v-1)(2v+1) + 2v$, and NP-hardness follows.

For simplicity, we ignored the fact that the starting tile is always on a far side of the grid and we constructed a Disconnected Crystal Bonds puzzle as if we could start wherever we like. This technicality can be accounted for as follows. If one of the leftmost vertices of the grid graph has degree $1$, simply place the starting tile adjacent to the corresponding crystal in the Disconnected Crystal Bonds puzzle. Otherwise, choose one of the leftmost vertices of the grid graph with degree $2$ and place the starting tile adjacent to the corresponding crystal. Furthermore, alter the puzzle by removing the empty tile that is adjacent to this crystal, extending the newly-broken path by a length of two, and adding a pair of crystals on the two new tiles, as in Figure~\ref{fig:disconnected_crystal_bonds2}. Now a solution of length $\leq v(2v+1) + 2v$ indicates a Hamiltonian cycle (rather than a Hamiltonian path), while the optimal solution having length $> v(2v+1) + 2v$ indicates that the graph has no Hamiltonian cycle.

\begin{figure}[htb]
	\centering
	\includegraphics[scale=1.2]{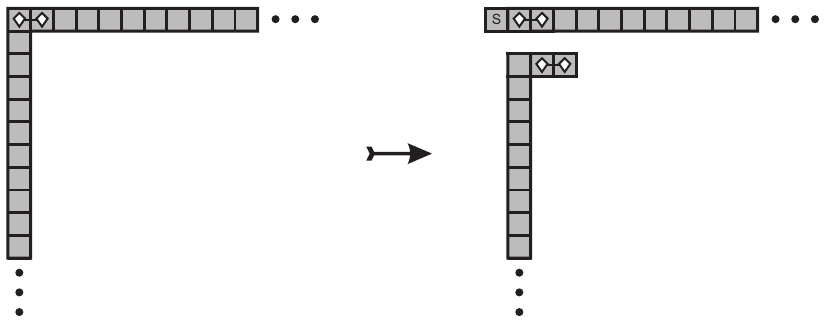}
	\caption{An illustration of the procedure to insert a starting tile while retaining NP-hardness.}\label{fig:disconnected_crystal_bonds2}
\end{figure}
\end{proof}

\section{Puzzle \#3: Hands of Time}\label{sec:hands_of_time}

The third and final puzzle to be introduced in \emph{Final Fantasy XIII-2} is called \emph{Hands of Time}. This puzzle presents the player with $n$ nodes arranged around a circular clock face, and on each node is a positive integer from $1$ to $\lfloor n/2 \rfloor$ (inclusive). The rules of the puzzle are as follows:
\begin{enumerate}
	\item The player starts by selecting one of the $n$ nodes on the clock face. Let's call the number in the selected node $m$. Upon selecting this node, its number vanishes, leaving the node empty.
	
	\item The player now has the option of selecting either the node $m$ positions clockwise from their last choice, or $m$ positions counter-clockwise from their last choice. The value $m$ is updated to be the number on the newly-selected node. That number then vanishes from the node, leaving the node empty.

	\item The player may never select an empty node (i.e., a node they have already selected in the past). Step 2 is repeated until they have no valid moves remaining.
\end{enumerate}

The goal of the puzzle is to select every node on the clock face before running out of valid moves.

In order to analyze Hands of Time puzzles, we first note that they can be represented as directed graphs in which the vertices represent the $n$ nodes on the clock face and the directed edges represent the valid moves clockwise and counter-clockwise from each node. A Hands of Time puzzle then tasks the player with finding a directed Hamiltonian path on this directed graph -- see Figure~\ref{fig:hands_of_time}.

\begin{figure}[htb]
	\centering
	\includegraphics[scale=0.3]{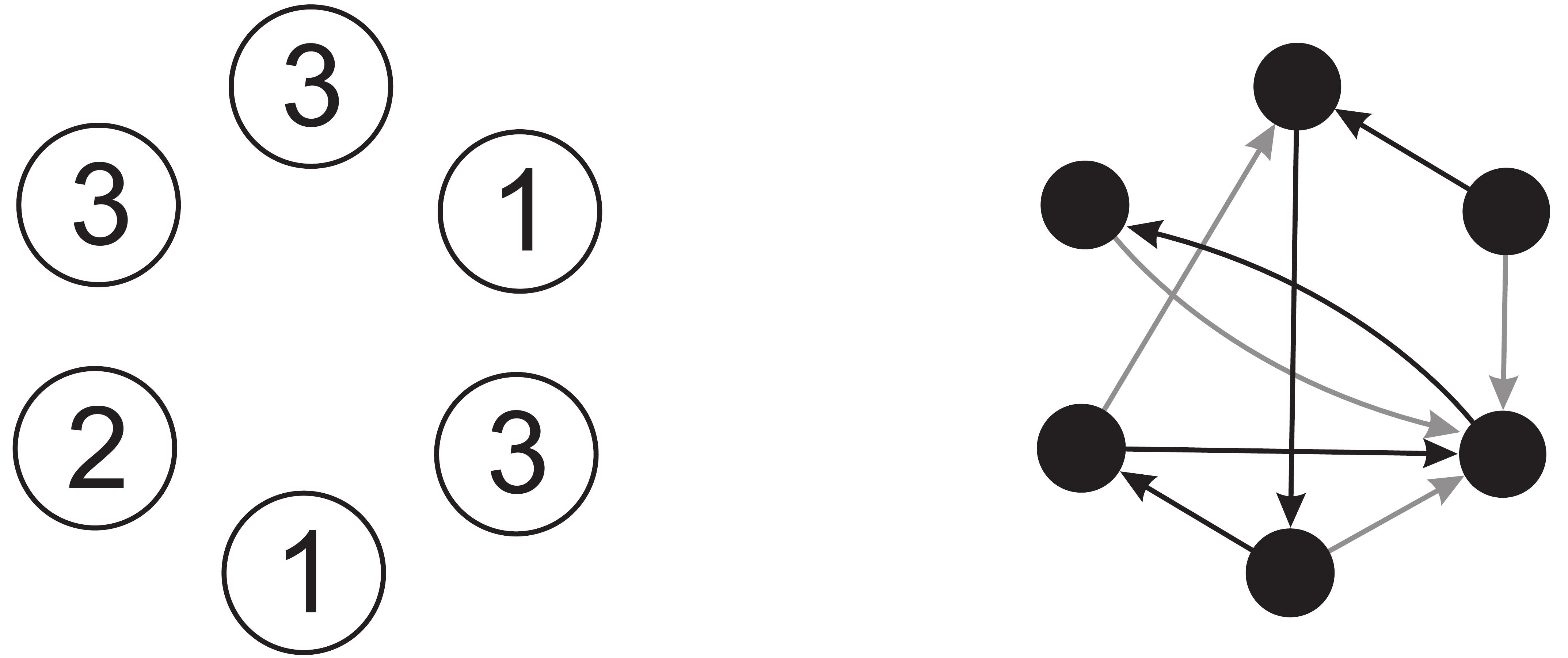}
	\caption{(left) A Hands of Time puzzle as it appears in-game, with $n = 6$ nodes. The numbers on the nodes indicate how many positions clockwise or counter-clockwise the player is allowed to jump. (right) The same Hands of Time puzzle, depicted as a directed graph. The black directed edges indicate a solution to the puzzle (i.e., a directed Hamiltonian path on the graph), while the grey directed edges indicate valid moves that are not part of that solution.}\label{fig:hands_of_time}
\end{figure}

Finding directed Hamiltonian paths in general is NP-hard \cite{Kar72}, even when each vertex has outdegree $1$ or $2$ \cite{Ple79}, so we might expect that the Hands of Time puzzle is NP-hard. On the other hand, if we allow the player to move at most $m$ nodes clockwise or counter-clockwise in step 2 of the puzzle (rather than exactly $m$ nodes), the problem becomes a directed analogue of the problem of finding a Hamiltonian path on a circular-arc graph, which can be solved in polynomial time \cite{Dam93,HCL09}. The complexity of the Hands of Time puzzle is thus not clear.

We now show that a slight generalization of the Hands of Time puzzle is NP-hard: the problem of solving a Hands of Time puzzle that has potentially been partially solved. We call a puzzle that plays the same as a Hands of Time puzzle, but potentially has some empty nodes around the clock face, a \emph{Partial Hands of Time} puzzle -- see Figure~\ref{fig:partial_hands_of_time}. Note that a Partial Hands of Time puzzle can be encoded by listing the positions of non-empty nodes and the numbers on them, and we assume this encoding (or one of comparable efficiency) throughout the following proof. Note in particular that, under this encoding, the number of empty nodes around the clock face can be exponential in $n$ (the number of non-empty nodes) while keeping the input length polynomial in $n$.

\begin{figure}[htb]
	\centering
	\includegraphics[scale=0.3]{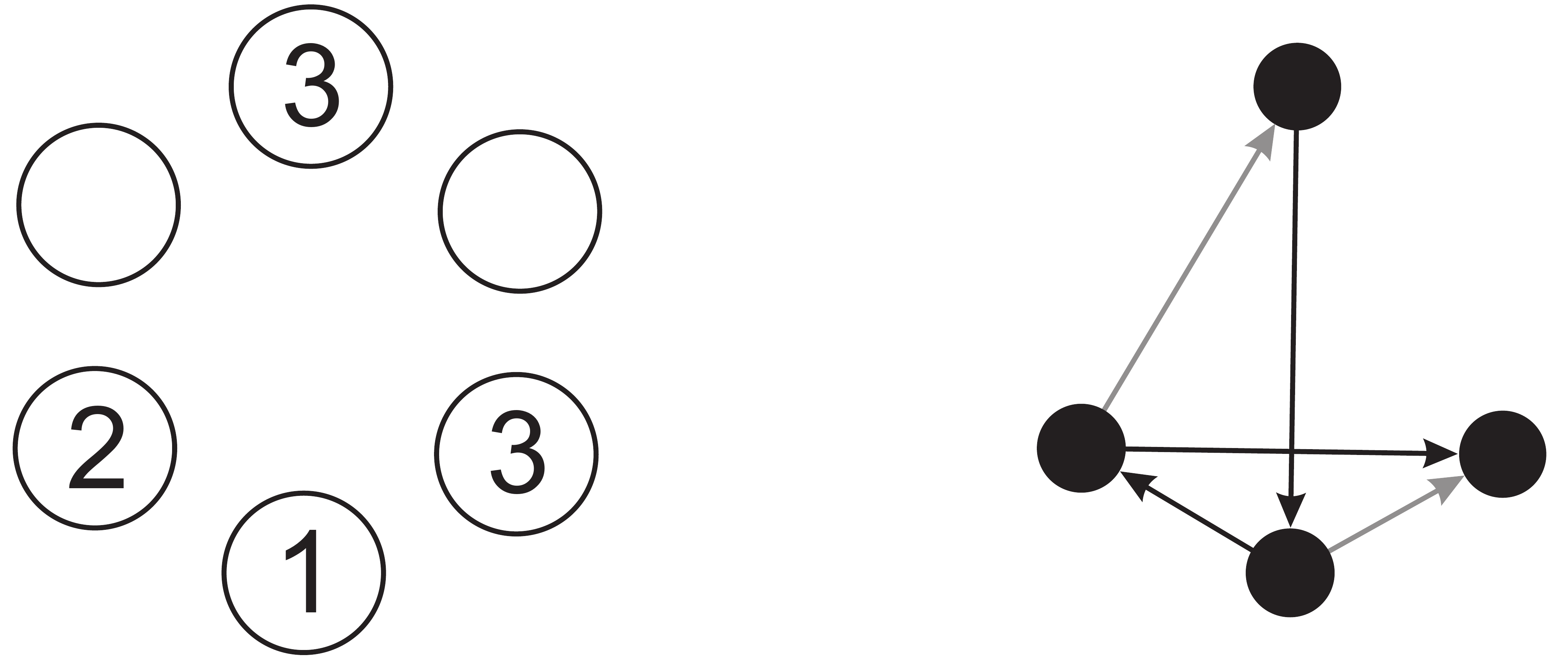}
	\caption{(left) A Partial Hands of Time puzzle -- the same as a Hands of Time puzzle, but with some empty nodes. (right) The directed graph representation of the same Partial Hands of Time puzzle.}\label{fig:partial_hands_of_time}
\end{figure}

\begin{thm}
	The Partial Hands of Time puzzle is NP-hard (and the associated decision problem is NP-complete).
\end{thm}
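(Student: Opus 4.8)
The plan is to show membership in NP and then prove hardness by a reduction from the directed Hamiltonian path problem on digraphs in which every vertex has outdegree at most $2$, which is NP-hard by \cite{Ple79}. Membership in NP is straightforward: a solution is a sequence of at most $n$ moves, and one verifies in polynomial time that each move jumps exactly $m$ positions (where $m$ is the number on the current node) clockwise or counter-clockwise, lands on a previously-unvisited non-empty node, and that all $n$ non-empty nodes are eventually visited. Because positions and numbers are encoded in binary and all arithmetic is performed modulo the clock size $N$, each check is polynomial in the input length even when $N$ is exponential in $n$.

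For hardness, I would first recast a Partial Hands of Time puzzle as a directed Hamiltonian path instance on its \emph{jump graph}: the vertices are the non-empty nodes, and a node at position $p$ carrying the number $m$ has out-edges exactly to the non-empty nodes located at positions $p + m$ and $p - m \pmod{N}$ (an empty target position simply yields no edge). Thus each vertex has outdegree at most $2$, but the two out-edges are forced to point to the two positions symmetric about $p$. The reduction must therefore realize an arbitrary outdegree-$\le 2$ digraph $G$ as such a jump graph. Here I would exploit the exponential clock size: choosing $N$ enormous, I place the non-empty nodes at widely separated, ``generic'' positions and assign each a distinct number, arranging that every intended edge is realized (its target position is occupied by the intended node) while every other $\pm m$ landing site falls on an empty position. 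Since there are only polynomially many nodes but $\Theta(N)$ positions, a spacing scheme with positions and numbers of polynomial bit-length can be chosen so that no unintended jump ever coincides with a node.

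The main obstacle is the symmetry constraint: a vertex $u$ of $G$ with two out-neighbors generally needs two \emph{independent} targets, whereas a jump node can only reach the pair $p \pm m$. A short linear-algebra check shows this cannot be fixed by placement alone (the system of constraints $p_{v_1} + p_{v_2} \equiv 2p_u \pmod{N}$ over all outdegree-$2$ vertices need not be consistent), so I would break the symmetry with a per-vertex \emph{selection gadget} $\Gamma_u$: a small, forced chain of jump nodes with one in-port and one out-port per out-edge of $u$, whose symmetric jump pairs fan the two branches out toward the (arbitrarily placed) in-ports of $u$'s out-neighbors. The delicate point is to design $\Gamma_u$ so that, for each choice of exit, there is a Hamiltonian path through $\Gamma_u$ from its in-port to that exit covering all of its internal nodes; this is what prevents the unchosen branch from leaving orphaned, unvisitable nodes. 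I would handle the branch that is not used as the continuation by giving each out-port a second (internal) symmetric target, so that an unused port is still covered locally rather than stranded. This is essentially the standard ``one-in, two-out'' selector of Hamiltonicity reductions, adapted to the algebraic jump constraint.

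Finally I would verify the correspondence in both directions. A Hamiltonian path of $G$ induces a global traversal that enters each gadget $\Gamma_u$ exactly once (through the single in-edge used at $u$) and leaves by the branch matching the chosen out-edge, visiting every node; conversely, the forced structure of the chains shows that any Hamiltonian path of the jump graph visits the gadgets in the order of a Hamiltonian path of $G$, using exactly one out-edge per gadget. Since the player may begin at any node, the free-endpoint directed Hamiltonian path of $G$ corresponds precisely to a solution of the constructed Partial Hands of Time puzzle, establishing NP-hardness and, together with membership in NP, NP-completeness.
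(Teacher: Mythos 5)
Your overall strategy---verify membership in NP under the binary encoding, then reduce from directed Hamiltonian path on digraphs of outdegree at most $2$ \cite{Ple79} by realizing the digraph as the ``jump graph'' of a puzzle on an exponentially large clock, with positions spread out so that every unintended jump lands on an empty node---is the same as the paper's. Your diagnosis of the central obstacle is also exactly right: a node at position $p$ carrying $m$ can only reach the symmetric pair $p\pm m$, so an outdegree-$2$ vertex cannot in general have both out-edges realized directly, and the constraints $p_{v_1}+p_{v_2}\equiv 2p_u$ need not be simultaneously satisfiable. The NP-membership paragraph is fine.

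The gap is that the selection gadget $\Gamma_u$, which is the entire technical content of the reduction, is never constructed. You correctly isolate the delicate requirement---that $\Gamma_u$ admit a forced Hamiltonian traversal from its in-port to \emph{either} out-port while still covering the internal nodes of the unused branch---but you only assert that the ``standard one-in, two-out selector'' adapts to the algebraic jump constraint. That constraint is precisely what makes the adaptation nontrivial: every internal node's two targets are tied together as $p\pm m$, so giving an unused out-port ``a second internal symmetric target'' introduces further non-empty nodes whose own coverage and collision-freeness must in turn be arranged, and no argument is given that this closes up. For comparison, the paper does not build a two-branch coverable gadget at all: it places the primary nodes at the repunit positions $\sum_{i<k}10^i$, puts on $(j,0)$ the number $d_{jk}=\sum_{i=\min\{j,k\}}^{\max\{j,k\}-1}10^i$ so that one jump direction realizes the edge to $(k,0)$ while the other lands on a single secondary relay node that forwards the player to $(m,0)$, and verifies collision-freeness digit by digit in base $10$; the stranded-branch issue you raise is dispatched there by the observation that the relay vertices have indegree and outdegree $1$. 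Your instinct that this last point deserves a real argument (a non-empty relay node must itself be selected, which constrains which out-edge of $(j,0)$ a solution may use) is a good one, but your proposal replaces the needed construction with an appeal to a gadget whose existence under the $p\pm m$ constraint is exactly what has to be proved.
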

\begin{proof}
	As with the previous puzzles, it is trivial to verify a proposed solution to a Partial Hands of Time puzzle, so we only need to show that it is NP-hard.
	
	We prove NP-hardness via a reduction from the problem of finding a Hamiltonian path on a directed graph in which each vertex has outdegree $1$ or $2$, which is known to be NP-hard \cite{Ple79}. We construct a Partial Hands of Time puzzle whose associated directed graph (as in Figure~\ref{fig:partial_hands_of_time}) is a given arbitrary directed graph with vertices of outdegree $1$ or $2$, plus perhaps some extra (irrelevant) vertices that have indegree and outdegree both equal to $1$ along some edges. Because the Partial Hands of Time puzzle has a solution if and only if the associated directed graph has a Hamiltonian path, and vertices with indegree and outdegree equal to $1$ don't affect the presence of Hamiltonian paths, NP-hardness follows.
	
	To this end, fix a directed graph with $v$ vertices, and consider a Partial Hands of Time puzzle with $v$ nodes (we refer to these original $v$ nodes as \emph{primary nodes}). Label the primary nodes (reading clockwise, choosing a starting point arbitrarily) $(0,0), (1,0), (2,0), \ldots, (v-1,0)$ and associate the primary nodes bijectively (and arbitrarily) with the $v$ vertices of the directed graph. Insert $10^k - 1$ nodes between nodes $(k,0)$ and $(k+1,0)$ for $0 \leq k \leq v-2$ and insert $10^{v-1} - 1$ nodes between nodes $(v-1,0)$ and $(0,0)$ (we refer to these newly-added nodes as \emph{secondary nodes}). Label the secondary nodes inserted between $(k,0)$ and $(k+1,0)$ (again, reading clockwise) as $(k,1), (k,2), \ldots, (k,10^k - 1)$ and label the secondary nodes inserted between $(v-1,0)$ and $(0,0)$ as $(v-1,1), (v-1,2), \ldots, (v-1,10^{v-1} - 1)$ -- see Figure~\ref{fig:partial_hands_of_time_construct}.

\begin{figure}[htb]
	\centering
	\includegraphics[width=0.95\textwidth]{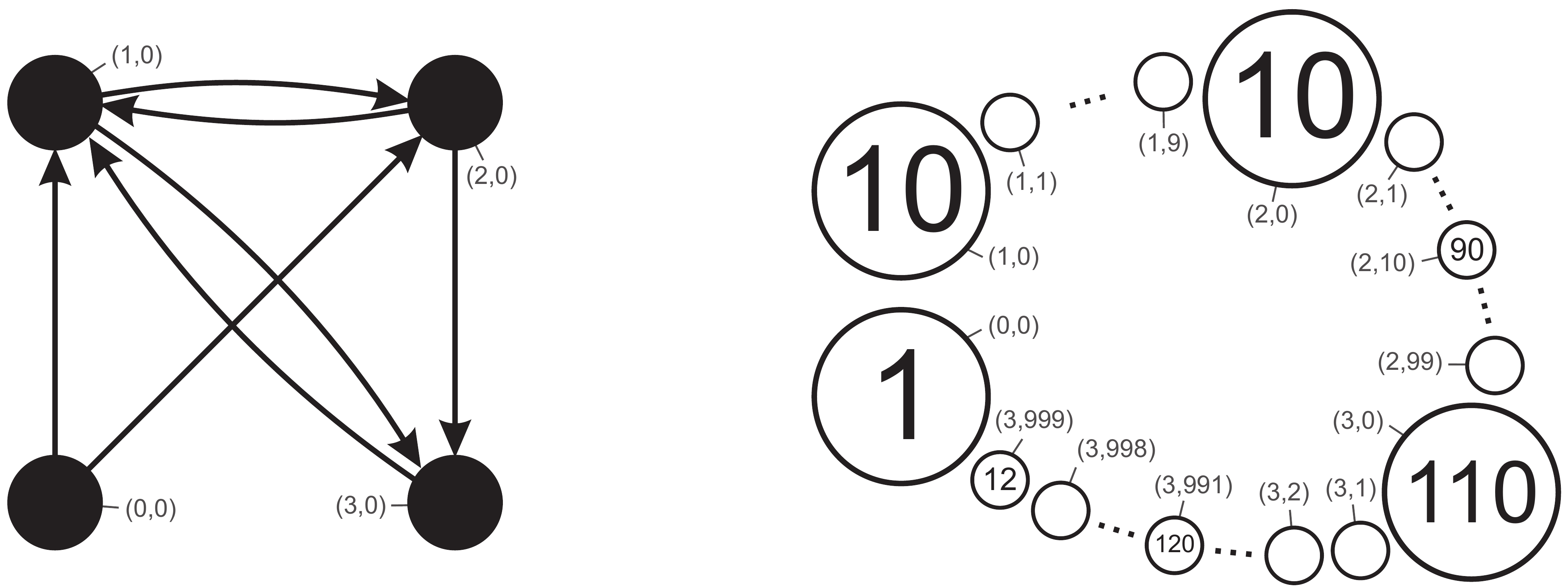}
	\caption{(left) A directed graph in which each vertex has outdegree $1$ or $2$. (right) A Partial Hands of Time puzzle constructed from the graph on the left. The four large circles represent the primary nodes (which correspond to the four vertices in the directed graph), while the small circles represent the secondary nodes. The labels for the nodes are written in grey, while the black numbers within the circles, represent how far the player moves from that node.}\label{fig:partial_hands_of_time_construct}
\end{figure}

	
	We now place integers on the nodes of the Partial Hands of Time puzzle so that it has a solution if and only if the given directed graph has a Hamiltonian path. Each vertex $(j,0)$ in the graph has a directed edge leading to a vertex $(k,0)$, and possibly a second directed edge leading to another vertex $(m,0)$, and we choose this labeling so that $k < m$. The procedure for placing numbers on the nodes is as follows:
	\begin{enumerate}
		\item On node $(j,0)$, place the integer $d_{jk} := \sum_{i=\min\{j,k\}}^{\max\{j,k\}-1}10^i$, which allows the player to move from node $(j,0)$ to node $(k,0)$.
		\item By placing an integer on the node $(j,0)$ in the previous step, the player may now also move the distance $d_{jk}$ away from node $(k,0)$. If the outdegree of vertex $(j,0)$ is $2$, we now consider three cases to make use of this fact:
		\begin{enumerate}
			\item If $k < m < j$, place on node $(j,d_{jk})$ the integer $d_{jkm}$, which we define as $d_{jkm} := d_{jm} + d_{jk}$.
			\item If $k < j < m$, place on node $(j,d_{jk})$ the integer $d_{jkm}$, which we define as $d_{jkm} := d_{jm} - d_{jk}$.
			\item If $j < k < m$, place on node $(v-1,10^{v-1} + d_{0j} - d_{jk})$ the integer $d_{jkm}$, which we define as $d_{jkm} := d_{jm} + d_{jk}$.
		\end{enumerate}
		Note that in all three cases we have placed a number on a secondary node that makes the sequence of moves $(j,0) \rightarrow (\text{secondary node}) \rightarrow (m,0)$ possible. Note also that the puzzle has at most $2v$ non-empty nodes ($v$ primary and $\leq v$ secondary).
	\end{enumerate}

	We now prove that this procedure produces a Partial Hands of Time puzzle that is solvable if and only if the original directed graph has a Hamiltonian path. For the most part this is clear -- we constructed the puzzle in such a way that each directed edge of the graph corresponds to either one or two moves in the puzzle. However, we must show that we have not introduced any new paths through the puzzle -- that is, we have to show that there is no way to land on a non-empty node via a move other than one of the moves suggested above.
	
	When moving from a primary node, this is clear because it is easily verified that
	\begin{itemize}
		\item $d_{jk} \neq d_{j^\prime k^\prime}$ when $\{j,k\} \neq \{j^\prime,k^\prime\}$, and
		\item the first decimal digit of $10^{v-1} + d_{0j} - d_{jk}$ in case 2(c) above is $8$ or $9$, while $d_{jk}$ in cases 2(a) and 2(b) contains only the digits $0$ and $1$ (so $10^{v-1} + d_{0j} - d_{jk} \neq d_{j^\prime k^\prime}$ for all $(j,k) \neq (j^\prime,k^\prime)$).
	\end{itemize}
	
	Before considering the case where the player moves from a non-empty secondary node, we first note from step 2 above that each non-empty secondary node is of the form $(\cdot,r)$, where $r$ is a number whose decimal expansion contains either only $0$ and $1$ (cases 2(a) and 2(b)) or only $1$, $8$, and $9$ (case 2(c)). Similarly, $d_{jkm}$ is an integer whose decimal expansion contains either only $0$, $1$, and $2$ (cases 2(a) and 2(c)) or only $0$, $1$, $8$, and $9$ (case 2(b)).
	
	If we start on a non-empty secondary node and move away from $(m,0)$ in case 2(a) above, we end up at the secondary node $(j,d_{jk} + d_{jkm})$. Since $d_{jk} + d_{jkm}$ contains a $2$ as its $j$th decimal digit, this secondary node must be empty. If we move away from $(m,0)$ in case 2(b) above, we end up at the secondary node $(v-1,10^{v-1} + d_{0j} + d_{jk} - d_{jkm})$. Since $10^{v-1} + d_{0j} + d_{jk} - d_{jkm}$ contains a $2$ as its $j$th decimal digit, this secondary node must be empty. Finally, if we move away from $(m,0)$ in case 2(c) above, we end up at the secondary node $(v-1,10^{v-1} + d_{0j} - d_{jk} - d_{jkm})$. The distance between this node and a non-empty secondary node from case 2(c) above is $|(10^{v-1} + d_{0j} - d_{jk} - d_{jkm}) - (10^{v-1} + d_{0j^\prime} - d_{j^\prime k^\prime})| = |d_{j^\prime j} - 3d_{jk} - d_{km} + d_{j^\prime k^\prime}|$, if we assume without loss of generality that $j \geq j^\prime$. Since the $3d_{jk}$ term ensures that this last quantity is non-zero for any values of $j,k,m,j^\prime,k^\prime$, we see that the secondary node $(v-1,10^{v-1} + d_{0j} - d_{jk} - d_{jkm})$ is empty, completing the proof.
\end{proof}

\section{Discussion}\label{sec:discussion}

While we have proved NP-hardness for the Tile Trial puzzle and shown that the Crystal Bonds puzzle can be efficiently solved, our only result on the Hands of Time puzzle concerns its partially-solved generalization. A strengthening of our result on the Hands of Time puzzle would be of interest. In particular, our proof of NP-hardness depends on the construction of a Partial Hands of Time puzzle in which there are a potentially exponential number of empty nodes; does the puzzle remain NP-hard if the number of empty nodes is promised to be polynomial in the number of non-empty nodes (i.e., is the Partial Hands of Time puzzle strongly NP-hard)? Better yet, is the Hands of Time puzzle itself NP-hard?

\vspace{0.1in}\noindent{\bf Acknowledgements.} The author was supported by the University of Guelph Brock Scholarship.

\bibliographystyle{alpha}

\end{document}